\newtheorem{theorem}{Theorem}
\newtheorem{corollary}[theorem]{Corollary}
\newtheorem{notation}[theorem]{Notation}
\newtheorem{proposition}[theorem]{Proposition}
\newenvironment{proof}[1][Proof]{\noindent\textbf{#1.} }{\ \rule{0.5em}{0.5em}}
\begin{document}

\title{Symplectic Covariance Properties for Shubin and Born--Jordan
Pseudo-Differential Operators}
\author{Maurice A. de Gosson \\
University of Vienna\\
Faculty of Mathematics - NuHAG\\
Nordbergstr. 15, 1090 Vienna}
\maketitle

\begin{abstract}
Among all classes of pseudo-differential operators only the Weyl operators
enjoy the property of symplectic covariance with respect to conjugation by
elements of the metaplectic group. In this paper we show that there is,
however, a weaker form of symplectic covariance for Shubin's $\tau $%
-dependent operators, in which the intertwiners no longer are metaplectic,
but still are invertible non-unitary operators. We also study the case of
Born--Jordan operators, which are obtained by averaging the $\tau $%
-operators over the interval $[0,1]$ (such operators have recently been
studied by Boggiatto and his collaborators). We show that metaplectic
covariance still hold for these operators, with respect top a subgroup of
the metaplectic group.
\end{abstract}

\section{Introduction}

In the early years of quantum mechanics physicists were confronted with an
ordering problem: assume that some quantization process associated to the
real variables $x$ (position) and $p$ (momentum) two operators $\widehat{X}$
and $\widehat{P}$ satisfying the canonical commutation rule $\widehat{X_{j}}%
\widehat{P_{j}}-\widehat{P_{j}}\widehat{X_{j}}=i\hbar$. What should then be
the operator associated to the monomial $x^{m}p^{n}$? The first to give a
mathematically motivated answer was Weyl \cite{Weyl}; he was was developing
his ideas on a group theoretical approach to quantization which lead to the
prescription \ 
\begin{equation}
x_{j}^{m}p_{j}^{\ell}\overset{\text{\textrm{Weyl}}}{\longrightarrow}\frac {1%
}{2^{\ell}}\sum_{k=0}^{\ell}%
\begin{pmatrix}
\ell \\ 
k%
\end{pmatrix}
\widehat{P_{j}}^{\ell-k}\widehat{X_{j}}^{m}\widehat{P_{j}}^{k}  \label{w3}
\end{equation}
It turns out that the Weyl ordering is a particular case of the more general
\textquotedblleft$\tau$-ordering\textquotedblright: for any real number $%
\tau $ one defines%
\begin{equation}
x_{j}^{m}p_{j}^{\ell}\overset{\tau}{\longrightarrow}\sum_{k=0}^{\ell}%
\begin{pmatrix}
\ell \\ 
k%
\end{pmatrix}
(1-\tau)^{k}\tau^{\ell-k}\widehat{P_{j}}^{k}\widehat{X_{j}}^{\ell }\widehat{%
P_{j}}^{\ell-k}  \label{tau}
\end{equation}
which reduces to Weyl's prescription when $\tau=\frac{1}{2}$. We will from
now assume that $\widehat{X_{j}}f=x_{j}f$ and $\widehat{P_{j}}f=-2\pi
i\partial_{x_{j}}f$. The $\tau$-ordering (\ref{tau}) is itself a particular
case of the Shubin pseudo-differential calculus (Shubin \cite{sh87}): given
a symbol $a$ the $\tau$-pseudo-differential operator $A_{\tau}=\limfunc{Op}%
_{\tau}(a)$ is formally defined by 
\begin{equation*}
A_{\tau}f(x)=\iint e^{2\pi ip(x-y)}a(\tau x+(1-\tau)y,p)f(y)dpdy;
\end{equation*}
for $\tau=\frac{1}{2}$ we recover the Weyl correspondence. Using Schwartz's
kernel theorem it is not difficult to show that for every continuous linear
operator $A:\mathcal{S}(\mathbb{R}^{n})\longrightarrow\mathcal{S}^{\prime }(%
\mathbb{R}^{n})$ and for every $\tau\in\mathbb{R}$ there exists $a\in%
\mathcal{S}^{\prime}(\mathbb{R}^{2n})$ such that $A=\limfunc{Op}_{\tau}(a)$;
the $\tau$-operators are thus of a very general nature. Now, it is
(reasonably) well-known (Stein \cite{Stein}, Wong \cite{Wong}) that among
all $\tau$-operators only the Weyl operators enjoy a symmetry property known
as \textquotedblleft symplectic covariance\textquotedblright:

\begin{quotation}
\emph{If} $\limfunc{Op}_{\tau}(a\circ S)=\widehat{S}\limfunc{Op}_{\tau}(a)%
\widehat{S}^{-1}$ \emph{for every} $\widehat{S}\in\limfunc{Mp}(2n,\mathbb{R}%
) $ \emph{then} $\tau=\frac{1}{2}$.
\end{quotation}

\noindent Here $\limfunc{Mp}(2n,\mathbb{R})$ is the metaplectic group and $S$
the projection of $\widehat{S}\in\limfunc{Mp}(2n,\mathbb{R})$ on the
symplectic group $\limfunc{Sp}(2n,\mathbb{R})$. Symplectic covariance in the
sense above is thus a \emph{characteristic property} of Weyl
pseudo-differential calculus. In fact, one shows more generally (Stein \cite%
{Stein}, \S 12.7, Wong, Chapter 30) that:

\begin{quotation}
\emph{Let} $a\longmapsto\limfunc{Op}(a)$ \emph{be a linear mapping from} $%
\mathcal{S}^{\prime}(\mathbb{R}^{2n})$ \emph{to the space of linear
operators that is continuous in the topology of }$\mathcal{S}^{\prime}(%
\mathbb{R}^{2n})$. \emph{Assume that: (i) if }$a=a(x)$, $a\in L^{\infty}(%
\mathbb{R}^{n})$\emph{, then }$\limfunc{Op}(a)$ \emph{is multiplication by} $%
a(x)$\emph{;} \emph{(ii) if }$S\in\limfunc{Sp}(2n,\mathbb{R})$\emph{\ then} $%
\limfunc{Op}(a\circ S)=\widehat{S}\limfunc{Op}(a)\widehat{S}^{-1}$. \emph{%
Then} $a\longmapsto\limfunc{Op}(a)$ \emph{is the Weyl correspondence}
\end{quotation}

\noindent so the property of symplectic covariance really singles out Weyl
operators among all possible \textquotedblleft quantization
schemes\textquotedblright.

The principal aim of this paper is to report on the fact that there exists a
weaker form of symplectic covariance for $\tau$-operators extending which
reduces to the case above when $\tau=\frac{1}{2}$. In fact, we will show in
Proposition \ref{propro} that to each $S\in\limfunc{Sp}(2n,\mathbb{R})$ one
can attach an invertible operator $R_{\tau}(S):\mathcal{S}(\mathbb{R}%
^{n})\longrightarrow\mathcal{S}(\mathbb{R}^{n})$ such that 
\begin{equation}
R_{\tau}(S)\limfunc{Op}\nolimits_{\tau}(a)=\limfunc{Op}\nolimits_{\tau}(a%
\circ S)R_{\tau}(S).  \label{form1}
\end{equation}
These operators are in general not unitary, and do thus not generate a copy
of $\limfunc{Mp}(2n,\mathbb{R})$.

As a consequence of our constructions we will be able to establish a similar
property for Born--Jordan pseudo-differential operators. These operators
were very recently introduced in de Gosson and Luef \cite{golu1} where it
was remarked that the constructions of Boggiatto and his collaborators \cite%
{bogetal1,bogetal2,bogetal3} of a certain pseudo-differential class was
related to a quantization procedure going back to Born and Jordan \cite{bj}
and historically anterior to the work of Weyl \cite{Weyl}. Born and Jordan's
quantization is based on the prescription%
\begin{equation}
x^{m}p^{\ell}\overset{\text{\textrm{BJ}}}{\longrightarrow}\frac{1}{\ell+1}%
\sum_{k=0}^{\ell}\widehat{P}^{\ell-k}\widehat{X}^{m}\widehat{P}^{k};
\label{bj}
\end{equation}
\ \ an elementary calculation shows that this correspondence is obtained by
averaging the $\tau$-ordering (\ref{tau}) over the interval $[0,1]$. This
suggests to define more generally the Born--Jordan pseudo-differntial
operator with symbol $a$ by the formula%
\begin{equation*}
A_{\mathrm{BJ}}=\int_{0}^{1}A_{\tau}d\tau.
\end{equation*}
We will see that the symplectic covariance formula (\ref{form1}) can be used
to derive a similar formula for $A_{\mathrm{BJ}}$.

In a recent paper de Gosson and Luef \cite{golu1} have shown that this
calculus corresponds to a generalization of an early quantization scheme due
to Born and Jordan, and which has been largely superseded by the more
elegant Weyl quantization procedure. Both Weyl and Born--Jordan quantization
hark back to the early years of quantum mechanics.

\begin{notation}
The Euclidean scalar product of two vectors $u$ and $v$ on $\mathbb{R}^{m}$
is denoted indifferently $u\cdot v$ or by $uv$. When $X$ is a symmetric
matrix we will often write $Xu^{2}$ for $Xu\cdot u$. The standard symplectic
form $\sigma $ on $\mathbb{R}^{n}\times \mathbb{R}^{n}\equiv \mathbb{R}^{2n}$
is defined by $\sigma (z,z^{\prime })=px^{\prime }-p^{\prime }x$ if $z=(x,p)$%
, $z^{\prime }=(x^{\prime },p^{\prime })$ the corresponding symplectic group
is $\limfunc{Sp}(2n,\mathbb{R})$. We denote by $\mathcal{S}(\mathbb{R}^{n})$
the Schwartz space of rapidly decreasing functions on $\mathbb{R}^{n}$ and
by $\mathcal{S}^{\prime }(\mathbb{R}^{n})$ its dual (the tempered
distributions). The normalizations we use correspond to that familiar from
the theory of pseudo-differential operators; for instance the Fourier
transform of $f\in \mathcal{S}(\mathbb{R}^{n})$ is 
\begin{equation*}
Ff(x)=\int e^{-2\pi ixx^{\prime }}f(x^{\prime })dx^{\prime }
\end{equation*}%
(it corresponds to the choice $\hbar =1/2\pi $ in the quantum-mechanical
literature).
\end{notation}

\noindent\textbf{Acknowledgements.} This work has been financed by the
Austrian Research Agency FWF (Projektnummer P20442-N13).

\section{The Shubin Calculus}

\subsection{Definitions and main properties}

\subsubsection{The pseudo-differential operators $A_{\protect\tau}$}

The $\tau$-pseudo-differential operator $A_{\tau}=\limfunc{Op}_{\tau }(a)$
with symbol $a\in\mathcal{S}^{\prime}(\mathbb{R}^{n})$ is, by definition,
the operator with distributional kernel%
\begin{equation}
K_{\tau}(x,y)=F_{2}^{-1}\left[ a(\tau x+(1-\tau)y,\cdot)\right] (x-y)
\label{kertau}
\end{equation}
where $F_{2}^{-1}$ is the inverse Fourier transform in the second set of
variables. We can thus write formally (Shubin \cite{sh87}) 
\begin{equation}
A_{\tau}f(x)=\iint e^{2\pi ip(x-y)}a(\tau x+(1-\tau)y,p)f(y)dpdy.
\label{ahat1}
\end{equation}
One easily verifies using this expression that the (formal) adjoint of $%
A_{\tau}=\limfunc{Op}_{\tau}(a)$ is given by 
\begin{equation}
\limfunc{Op}\nolimits_{\tau}(a)^{\ast}=\limfunc{Op}\nolimits_{1-\tau}(%
\overline{a}).  \label{opadj}
\end{equation}

\subsubsection{The operators $\protect\widehat{T}_{\protect\tau}(z)$}

Let $\widehat{T}(z_{0})$ be the Heisenberg operator: it is defined for $f\in%
\mathcal{S}^{\prime}(\mathbb{R}^{n})$ by 
\begin{equation}
\widehat{T}(z_{0})f(x)=e^{2\pi i(p_{0}x-\frac{1}{2}p_{0}x_{0})}f(x-x_{0})
\label{hw}
\end{equation}
where $z_{0}=(x_{0},p_{0})$. Let $\tau$ be a real parameter and set, more
generally,%
\begin{equation}
\widehat{T}_{\tau}(z_{0})f(x)=e^{2\pi i(p_{0}x-(1-\tau)p_{0}x_{0})}f(x-x_{0})
\label{hopt}
\end{equation}
that is, equivalently, 
\begin{equation}
\widehat{T}_{\tau}(z_{0})=e^{i\pi(2\tau-1)p_{0}x_{0}}\widehat{T}(z_{0}).
\label{hoptbis}
\end{equation}
We have $\widehat{T}_{1/2}(z_{0})=\widehat{T}(z_{0})$, and%
\begin{equation}
\widehat{T}_{\tau}(z_{0})^{-1}=\widehat{T}_{1-\tau}(-z_{0}).  \label{tadj}
\end{equation}
It is immediate to check the following relations:%
\begin{align}
\widehat{T}_{\tau}(z_{0})\widehat{T}_{\tau}(z_{1}) & =e^{2\pi i\sigma
(z_{0},z_{1})}\widehat{T}_{\tau}(z_{1})\widehat{T}_{\tau}(z_{0})
\label{comm1} \\
\widehat{T}_{\tau}(z_{0}+z_{1}) & =e^{-i\pi\sigma(z_{0},z_{1})}\widehat{T}%
_{\tau}(z_{0})\widehat{T}_{\tau}(z_{1}).  \label{comm2}
\end{align}

For many purposes it is useful to write formula (\ref{ahat1}) in terms of
the operators $\widehat{T}_{\tau}(z)$: 
\begin{equation}
A_{\tau}f=\limfunc{Op}\nolimits_{\tau}(a)f=\int a_{\sigma}(z)\widehat{T}%
_{\tau}(z)fdz  \label{atauf}
\end{equation}
where $a_{\sigma}$ is the symplectic Fourier transform of $a$, that is 
\begin{equation*}
a_{\sigma}(z)=\int e^{-2\pi i\sigma(z,z^{\prime})}a(z^{\prime})dz^{\prime }%
\text{.}
\end{equation*}

Following the usage in the theory of Weyl operators, we will call $%
a_{\sigma} $ the \textquotedblleft twisted symbol of $A_{\tau}$%
\textquotedblright. The distributional kernel of $A_{\tau}$ can then be
written%
\begin{equation}
K_{\tau}(x,y)=F_{2}^{-1}\left[ a_{\sigma}(x-y,\cdot)\right] (\tau
x+(1-\tau)y)  \label{kertaubis}
\end{equation}
it is often more suitable for calculations than formula (\ref{kertau}).

\subsubsection{A composition formula}

The $\tau$-operators can be composed exactly in the same way as usual Weyl
operators:

\begin{proposition}
Let $A_{\tau}$ and $B_{\tau}$ be given by%
\begin{equation}
A_{\tau}=\int a_{\sigma}(z)\widehat{T}_{\tau}(z)dz\text{ \ and \ \ }B_{\tau
}=\int b_{\sigma}(z)\widehat{T}_{\tau}(z)dz.  \label{ataubetau}
\end{equation}
Then, if $A_{\tau}B_{\tau}$ is defined, we have $A_{\tau}B_{\tau}=C_{\tau}$
with%
\begin{equation}
c_{\sigma}(z)=\int e^{i\pi\sigma(z,z^{\prime})}a_{\sigma}(z-z^{\prime
})b_{\sigma}(z^{\prime})dz^{\prime}.  \label{comp1}
\end{equation}
\end{proposition}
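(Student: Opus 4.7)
The plan is to substitute the integral representations (\ref{ataubetau}) of $A_{\tau}$ and $B_{\tau}$ into the product, interchange the order of integration, and collapse the resulting product $\widehat{T}_{\tau}(z_{0})\widehat{T}_{\tau}(z_{1})$ into a single operator $\widehat{T}_{\tau}(z_{0}+z_{1})$ using (\ref{comm2}). Concretely, I would start from
\begin{equation*}
A_{\tau}B_{\tau}=\iint a_{\sigma}(z_{0})b_{\sigma}(z_{1})\,\widehat{T}_{\tau}(z_{0})\widehat{T}_{\tau}(z_{1})\,dz_{0}\,dz_{1},
\end{equation*}
and rearrange the Weyl-type commutation relation (\ref{comm2}) into the multiplicative form $\widehat{T}_{\tau}(z_{0})\widehat{T}_{\tau}(z_{1})=e^{i\pi\sigma(z_{0},z_{1})}\widehat{T}_{\tau}(z_{0}+z_{1})$.

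Next I would perform the linear change of variables $z=z_{0}+z_{1}$, $z^{\prime}=z_{1}$ (unit Jacobian), which gives
\begin{equation*}
A_{\tau}B_{\tau}=\iint e^{i\pi\sigma(z-z^{\prime},z^{\prime})}\,a_{\sigma}(z-z^{\prime})\,b_{\sigma}(z^{\prime})\,\widehat{T}_{\tau}(z)\,dz^{\prime}\,dz.
\end{equation*}
By antisymmetry of $\sigma$ one has $\sigma(z-z^{\prime},z^{\prime})=\sigma(z,z^{\prime})-\sigma(z^{\prime},z^{\prime})=\sigma(z,z^{\prime})$, so the bracketed $z^{\prime}$-integral is exactly the candidate $c_{\sigma}(z)$ appearing in (\ref{comp1}). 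Comparing with (\ref{atauf}) then identifies the result as $C_{\tau}=\limfunc{Op}_{\tau}(c)$, where $c$ is the symbol whose symplectic Fourier transform is $c_{\sigma}$.

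The main obstacle is not the algebra, which is essentially forced by (\ref{comm2}), but the rigorous justification of the manipulations when $a$ and $b$ are only tempered distributions: the double integral must be interpreted weakly (e.g.\ as an oscillatory integral, or tested against a pair of Schwartz functions), and the interchange of integrations justified in that sense. This is routine when $a_{\sigma},b_{\sigma}\in\mathcal{S}(\mathbb{R}^{2n})$ by Fubini; the general case is then obtained by density together with the continuity hypothesis that makes $A_{\tau}B_{\tau}$ defined. Finally, the identification of the output as a $\tau$-operator with twisted symbol $c_{\sigma}$ uses the injectivity of $a\longmapsto A_{\tau}$, which follows from Schwartz's kernel theorem via the kernel formula (\ref{kertaubis}); this ensures that $c_{\sigma}$ is uniquely determined by the reconstructed integral representation.
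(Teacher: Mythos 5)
Your proposal is correct and follows essentially the same route as the paper: substitute the integral representations, collapse $\widehat{T}_{\tau}(z_{0})\widehat{T}_{\tau}(z_{1})$ via (\ref{comm2}), and change variables $z=z_{0}+z_{1}$, $z^{\prime}=z_{1}$ (you even state this substitution more carefully than the paper, which misprints it as $z^{\prime}=z$, and you make the antisymmetry step $\sigma(z-z^{\prime},z^{\prime})=\sigma(z,z^{\prime})$ explicit). Your added remarks on the weak interpretation of the integrals for distributional symbols are a sensible supplement but not a departure from the paper's argument.
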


\begin{proof}
We have%
\begin{equation*}
A_{\tau}B_{\tau}=\iint a_{\sigma}(z_{0})b_{\sigma}(z_{1})\widehat{T}_{\tau
}(z_{0})\widehat{T}_{\tau}(z_{1})dz_{0}dz_{1}
\end{equation*}
and hence, using formula (\ref{comm2})%
\begin{equation*}
A_{\tau}B_{\tau}=\iint e^{i\pi\sigma(z_{0},z_{1})}a_{\sigma}(z_{0})b_{\sigma
}(z_{1})\widehat{T}_{\tau}(z_{0}+z_{1})dz_{0}dz_{1}.
\end{equation*}
The composition formula (\ref{comp1}) follows making the change of variables 
$z=z_{0}+z_{1}$, $z^{\prime}=z$.
\end{proof}

\subsubsection{Relation with the $\protect\tau$-Wigner transform}

Boggiatto and his collaborators \cite{bogetal1,bogetal2,bogetal3} have
recently introduced a $\tau$-dependent Wigner transform $W_{\tau}(f,g)$
related with the Shubin $\tau$-pseudo-differential calculus. Averaging over $%
\tau$ in the interval $[0,1]$ leads to an element of the Cohen class, i.e.
to a transform of the type $Q(f,g)=W_{\tau}(f,g)\ast\theta$ where $\theta \in%
\mathcal{S}^{\prime}(\mathbb{R}^{2n})$.

Following result relates the operator $A_{\tau}$ to the $\tau$-Wigner
transform:

\begin{proposition}
Let $f,g\in\mathcal{S}(\mathbb{R}^{n})$. We have the formula%
\begin{equation}
(A_{\tau}f|g)_{L^{2}}=\langle a,W_{\tau}(f,g)\rangle  \label{awtau}
\end{equation}
where $W_{\tau}(f,g)$ is the $\tau$-dependent cross-Wigner transform of $%
(f,g)$ defined by%
\begin{equation}
W_{\tau}(f,g)(z)=\int e^{-2\pi iyp}f(x+\tau y)\overline{g(x-(1-\tau)y)}dy.
\label{wtau}
\end{equation}
\end{proposition}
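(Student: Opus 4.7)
The plan is a direct change of variables starting from the kernel formula (\ref{ahat1}) for $A_{\tau}f$. First, write
\begin{equation*}
(A_{\tau}f|g)_{L^{2}} = \iiint e^{2\pi i p(x-y)}\, a(\tau x+(1-\tau)y,p)\, f(y)\,\overline{g(x)}\, dp\,dy\,dx,
\end{equation*}
which is just the sesquilinear pairing written out from the definition. The goal is to recognise the $(x,p)$-dependence of $a$ on the right-hand side of (\ref{awtau}), so I would introduce new variables that make $a$ depend only on $u$ and a dual variable $p$.

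Next, I would perform the linear substitution $u=\tau x+(1-\tau)y$, $v=x-y$, whose inverse is $x=u+(1-\tau)v$, $y=u-\tau v$ and which has Jacobian determinant $-1$, so $dx\,dy=du\,dv$. The exponential becomes $e^{2\pi i pv}$ and the integrand transforms into $a(u,p)\,f(u-\tau v)\,\overline{g(u+(1-\tau)v)}$. After the further cosmetic change $v\mapsto -y$ (which leaves the full-space integral invariant) and renaming $u$ to $x$, the expression reads
\begin{equation*}
\iiint e^{-2\pi i yp}\, a(x,p)\, f(x+\tau y)\,\overline{g(x-(1-\tau)y)}\, dp\,dx\,dy.
\end{equation*}
Performing the $y$-integration at fixed $(x,p)$ reproduces $W_{\tau}(f,g)(x,p)$ from (\ref{wtau}), so that the triple integral collapses to $\iint a(x,p)\,W_{\tau}(f,g)(x,p)\,dx\,dp=\langle a,W_{\tau}(f,g)\rangle$, which is exactly (\ref{awtau}).

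The one subtle point, and the only place I would be cautious, is the justification of Fubini and of the change of variables when $a\in\mathcal{S}'(\mathbb{R}^{2n})$, since the calculation as written is an honest Lebesgue manipulation only when $a\in\mathcal{S}(\mathbb{R}^{2n})$. The standard remedy is to carry out the computation for $a\in\mathcal{S}(\mathbb{R}^{2n})$ and then extend by continuity: since $f,g\in\mathcal{S}(\mathbb{R}^{n})$ one checks (e.g.\ from the explicit formula (\ref{wtau})) that $W_{\tau}(f,g)\in\mathcal{S}(\mathbb{R}^{2n})$ and depends continuously on $(f,g)$, so both sides of (\ref{awtau}) are continuous sesquilinear functionals of $a\in\mathcal{S}'(\mathbb{R}^{2n})$ that agree on the dense subspace $\mathcal{S}(\mathbb{R}^{2n})$. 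This density/continuity argument is the only real obstacle; the main identity itself is a one-line Jacobian computation.
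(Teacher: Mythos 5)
Your proof is correct and is essentially the paper's own argument run in the opposite direction: the paper starts from $\langle a,W_{\tau}(f,g)\rangle$ and applies the inverse of your substitution ($y'=x+\tau y$, $x'=x-(1-\tau)y$) to recover the kernel formula (\ref{ahat1}), while you start from $(A_{\tau}f|g)_{L^{2}}$ and use the same linear change of variables to produce $W_{\tau}(f,g)$. Your added remark on extending from $a\in\mathcal{S}(\mathbb{R}^{2n})$ to $a\in\mathcal{S}'(\mathbb{R}^{2n})$ by density and continuity is a reasonable precision that the paper leaves implicit.
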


\begin{proof}
We have%
\begin{equation*}
\langle a,W_{\tau}(\psi,\phi)\rangle=\int e^{-2\pi iyp}a(z)\psi(x+\tau y)%
\overline{\phi(x-(1-\tau)y)}dydpdx;
\end{equation*}
setting $x+\tau y=y^{\prime}$, $x-(1-\tau)y=y^{\prime}$ we get 
\begin{equation*}
\langle a,W_{\tau}(\psi,\phi)\rangle=\int e^{-2\pi
i(x^{\prime}-y^{\prime})p}a((1-\tau)x^{\prime}+\tau
y^{\prime},p)\psi(y^{\prime})\overline {\phi(x^{\prime})}dydpdx
\end{equation*}
hence the equality (\ref{awtau}) in view of (\ref{ahat1}).
\end{proof}

Formula (\ref{awtau}) yields an alternative definition of the operator $%
A_{\tau}f$ for an arbitrary symbol $a\in\mathcal{S}^{\prime}(\mathbb{R}^{n})$
and $f\in\mathcal{S}(\mathbb{R}^{n})$: choose $g\in\mathcal{S}(\mathbb{R}%
^{n})$; then$W_{\tau}(f,g)\in\mathcal{S}(\mathbb{R}^{2n})$ and the
distributional bracket $\langle a,W_{\tau}(f,g)\rangle$ is thus defined; by
definition $A_{\tau}$ is the the continuous operator $\mathcal{S}(\mathbb{R}%
^{n})\longrightarrow\mathcal{S}^{\prime}(\mathbb{R}^{n})$ defined by the
right hand-side of (\ref{awtau}).

We notice that The $\tau$-dependent Wigner transform $W_{\tau}\psi=W_{\tau
}(\phi,\psi)$ satisfies the same marginal properties as the ordinary Wigner
transform: for every $f\in L^{1}(\mathbb{R}^{n})\cap L^{2}(\mathbb{R}^{n})$
we have%
\begin{equation}
\int W_{\tau}f(x,p)dp=|f(x)|^{2}\text{ ,\ }\int W_{\tau}\psi
(x,p)dx=|Ff(p)|^{2}\text{.\ }  \label{marginal1}
\end{equation}
(see Boggiatto et al. \cite{bogetal1}).

In the case $\tau=1$ the transform $W_{\tau}$ reduces to the Rihaczek
distribution, and when $\tau=1$ we get the dual Rihaczek distribution.

\section{Symplectic Covariance in Shubin Calculus}

\subsection{A class of intertwining operators}

\subsubsection{The symplectic Cayley transform}

We will use the following notation: 
\begin{align*}
\limfunc{Sp}\nolimits_{(0)}(2n,\mathbb{R}) & =\{S\in\limfunc{Sp}(2n,\mathbb{R%
}):\det(S-I)\neq0\} \\
\limfunc{Sym}\nolimits_{(0)}(2n,\mathbb{R}) & =\{M\in \limfunc{Sym}(2n,%
\mathbb{R}):\det(M-\tfrac{1}{2}J)\neq0\}.
\end{align*}
Let $S\in\limfunc{Sp}\nolimits_{(0)}(2n,\mathbb{R})$; by definition the
symplectic Cayley transform (introduced in de Gosson \cite{Birk,JMP,RMP,JMPA}%
) of $S$ is the symmetric matrix given by 
\begin{equation}
M(S)=\tfrac{1}{2}J(S+I)(S-I)^{-1}  \label{cayley}
\end{equation}
(the symmetry of $M(S)$ is readily verified using the relation $%
S^{T}JS=SJS^{T}=J$, which is equivalent to $S\in\limfunc{Sp}(2n,\mathbb{R})$%
). The mapping $M(\cdot)$ is a bijection $\limfunc{Sp}_{(0)}(2n,\mathbb{R}%
)\longrightarrow\limfunc{Sym}\nolimits_{(0)}(2n,\mathbb{R})$ and the inverse
of that bijection is given by 
\begin{equation}
S=(M-\tfrac{1}{2}J)^{-1}(M+\tfrac{1}{2}J).  \label{invcayley}
\end{equation}
We have the properties 
\begin{equation}
M(S^{-1})=-M(S)  \label{ms}
\end{equation}
and, when in addition $S^{\prime},SS^{\prime}\in\limfunc{Sp}%
\nolimits_{(0)}(2n,\mathbb{R})$: 
\begin{equation}
M(SS^{\prime})=M(S)+(S^{T}-I)^{-1}J(M(S)+M(S^{\prime}))^{-1}J(S-I)^{-1}.
\label{mss'}
\end{equation}

\subsubsection{The intertwining operators $R_{\protect\tau}(S)$}

We will need the following well-known generalization of the Fresnel formula
(see e.g. Folland \cite{Folland}, Appendix A): let $X$ be a real invertible
matrix of dimension $m$; then: 
\begin{equation}
\int e^{-2\pi iuv}e^{i\pi Xv^{2}}dv=|\det X|^{-1/2}e^{\frac{i\pi}{4}\limfunc{%
sign}X}e^{-i\pi X^{-1}u^{2}}  \label{fresnel}
\end{equation}
where $\limfunc{sign}X$ is the difference between the number of $>0$ and $<0$
eigenvalues of $X$. Using this formula and the two lemmas above we set out
to study the operators 
\begin{equation}
R_{\tau}(S)=\sqrt{|\det(S-I)|}\int\widehat{T}_{\tau}(Sz)\widehat{T}_{\tau
}(-z)dz  \label{rs1}
\end{equation}
defined for $S\in\limfunc{Sp}\nolimits_{(0)}(2n,\mathbb{R})$.

\begin{proposition}
\label{propro}(i) Let $S\in\limfunc{Sp}\nolimits_{(0)}(2n,\mathbb{R})$. The
operator $R_{\tau}(S)$ is a continuous mapping $\mathcal{S}(\mathbb{R}%
^{n})\longrightarrow\mathcal{S}(\mathbb{R}^{n})$ satisfying 
\begin{equation}
R_{\tau}(S)\widehat{T}_{\tau}(z)=\widehat{T}_{\tau}(Sz)R_{\tau}(S)
\label{gamma2}
\end{equation}
and we have 
\begin{equation}
R_{\tau}(S)\limfunc{Op}\nolimits_{\tau}(a)=\limfunc{Op}\nolimits_{\tau}(a%
\circ S)R_{\tau}(S).  \label{interop1}
\end{equation}
(ii) Let $S$, $S^{\prime}$, $SS^{\prime}\in\limfunc{Sp}\nolimits_{(0)}(2n,%
\mathbb{R})$. We have%
\begin{equation}
R_{\tau}(SS^{\prime})=e^{i\frac{\pi}{4}\limfunc{sign}M(SS^{\prime})}R_{%
\tau}(S)R_{\tau}(S^{\prime})  \label{rs2}
\end{equation}
(iii) The operator (\ref{rs1}) satisfies 
\begin{equation}
R_{\tau}(S^{-1})=R_{\tau}(S)^{-1}=R_{1-\tau}(S)^{\ast}  \label{rs3}
\end{equation}
\end{proposition}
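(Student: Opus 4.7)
The plan is to first reformulate the definition of $R_{\tau}(S)$ into a more manageable form. Applying the product rule (\ref{comm2}) to the pair $(Sz,-z)$ gives $\widehat{T}_{\tau}(Sz)\widehat{T}_{\tau}(-z)=e^{-i\pi\sigma(Sz,z)}\widehat{T}_{\tau}((S-I)z)$, and the substitution $w=(S-I)z$ (which has Jacobian $|\det(S-I)|^{-1}$) turns (\ref{rs1}) into
\[
R_{\tau}(S)=\tfrac{1}{\sqrt{|\det(S-I)|}}\int e^{i\pi M(S)w^{2}}\widehat{T}_{\tau}(w)\,dw,
\]
because $S(S-I)^{-1}=I+(S-I)^{-1}$ collapses the phase to $\sigma(w,(S-I)^{-1}w)$, and the identity $J(S-I)^{-1}=M(S)-\tfrac12 J$ (which is equivalent to the definition of the Cayley transform) then gives $\sigma(w,(S-I)^{-1}w)=-M(S)w^{2}$. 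This rewriting identifies $R_{\tau}(S)$ as a Fresnel-type oscillatory superposition of the $\widehat{T}_{\tau}(w)$ whose "symbol" is $e^{i\pi M(S)w^{2}}$; the continuity $\mathcal{S}(\mathbb{R}^{n})\to\mathcal{S}(\mathbb{R}^{n})$ then follows from the standard oscillatory-integral analysis of such Fresnel operators (integration by parts in the $x$- and $p$-components of $w$).

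Next I would prove the intertwining relation (\ref{gamma2}) directly from this form. Applying $\widehat{T}_{\tau}(z_{0})$ on the right of $R_\tau(S)$ and $\widehat{T}_{\tau}(Sz_{0})$ on the left, (\ref{comm2}) collapses each product to a single $\widehat{T}_{\tau}(w+z_0)$ or $\widehat{T}_{\tau}(w+Sz_0)$, and the shifts $w\mapsto w'-z_{0}$ and $w\mapsto w'-Sz_{0}$ bring both expressions into the same form $\int (\text{phase}(w',z_{0}))\,\widehat{T}_{\tau}(w')\,dw'$. Equating coefficients reduces to two identities: the linear-in-$w'$ part is exactly the defining identity $2M(S)(S-I)=J(S+I)$ of the Cayley transform, and the quadratic-in-$z_{0}$ part is $S^{T}M(S)S=M(S)$, which follows from $S^{T}JS=J$ and the elementary commutation $(S+I)(S-I)^{-1}=(S-I)^{-1}(S+I)$. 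The operator version (\ref{interop1}) is then immediate: expand $\limfunc{Op}_{\tau}(a)$ by (\ref{atauf}), slide $R_{\tau}(S)$ inside using (\ref{gamma2}), change variable $z\mapsto S^{-1}z$, and use that the symplectic Fourier transform satisfies $(a\circ S)_{\sigma}(z)=a_{\sigma}(Sz)$.

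For part (ii) I would compose the two Fresnel-type representations: using (\ref{comm2}) once, $R_{\tau}(S)R_{\tau}(S')$ becomes a double integral of $e^{i\pi[M(S)w^{2}+M(S')w'^{2}+\sigma(w,w')]}\widehat{T}_{\tau}(w+w')$. Setting $u=w+w'$ and integrating out the remaining variable is a purely quadratic Fresnel integral with matrix $M(S)+M(S')$, so (\ref{fresnel}) produces an amplitude $|\det(M(S)+M(S'))|^{-1/2}$, a phase $e^{i\pi/4\limfunc{sign}(M(S)+M(S'))}$, and a residual quadratic form in $u$. Using the explicit identities $J+2M(S)=2JS(S-I)^{-1}$ and $J-2M(S)=-2J(S-I)^{-1}$, together with $S^{T}JS=J$, the residual quadratic collapses to the right-hand side of (\ref{mss'}), that is, to $M(SS')u^{2}$. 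Reconciling the Fresnel amplitudes and signatures that emerge with the factor $e^{i\pi/4\limfunc{sign}M(SS')}$ and the prefactor $|\det(SS'-I)|^{-1/2}$ of $R_{\tau}(SS')$ is the arithmetically heaviest step, and I expect this to be the main obstacle---it is essentially the Maslov-cocycle identity that is responsible for the fact that the analogous metaplectic representation is only projective.

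Finally, for part (iii), the relation $R_{\tau}(S^{-1})=R_{\tau}(S)^{-1}$ is obtained by computing $R_{\tau}(S^{-1})R_{\tau}(S)$ directly. Using (\ref{ms}) to write $M(S^{-1})=-M(S)$ and noting that $|\det(S^{-1}-I)|=|\det(S-I)|$ (because $\det S=1$), the double integral in $(w,w')$ becomes, after the change $u=w+w'$, $v=w$, the single integral of $e^{i\pi M(S)u^{2}}\widehat{T}_{\tau}(u)$ against $\int e^{-i\pi(J+2M(S))u\cdot v}\,dv$; this last factor is a delta distribution that, combined with $|\det(J+2M(S))|=2^{2n}/|\det(S-I)|$, exactly cancels the prefactor and leaves $\widehat{T}_{\tau}(0)=I$. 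For $R_{\tau}(S)^{-1}=R_{1-\tau}(S)^{\ast}$ I would take the adjoint of the Fresnel representation: complex conjugation flips the sign of $M(S)$, and (\ref{tadj}) together with the unitarity of $\widehat{T}_{\tau}$ give $\widehat{T}_{\tau}(w)^{\ast}=\widehat{T}_{1-\tau}(-w)$; the substitution $w\mapsto-w$ combined with $M(S^{-1})=-M(S)$ then identifies $R_{\tau}(S)^{\ast}$ with $R_{1-\tau}(S^{-1})=R_{1-\tau}(S)^{-1}$, which by taking inverses is equivalent to the claim.
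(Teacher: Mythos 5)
Your route for parts (ii) and (iii) is essentially the paper's, while part (i) is organized differently: you first derive the representation $R_{\tau}(S)=|\det(S-I)|^{-1/2}\int e^{i\pi M(S)w^{2}}\widehat{T}_{\tau}(w)\,dw$ (this is the paper's Proposition \ref{lemmone}, which the paper only proves \emph{after} Proposition \ref{propro}) and then check the intertwining relation (\ref{gamma2}) by matching quadratic phases, which you correctly reduce to the two matrix identities $2M(S)(S-I)=J(S+I)$ and $S^{T}M(S)S=M(S)$; both are true and your justification (the Cayley definition, $S^{T}JS=J$, and commutation of rational functions of $S$) works. The paper instead proves (\ref{gamma2}) directly from the defining integral (\ref{rs1}) via the $F(z,z_{0})$, $G(z,z_{0})$ substitution trick, with no Cayley transform needed at that stage; your ordering buys you the symbol formula (\ref{gamma4}) up front, which you then reuse in (ii)--(iii) exactly as the paper implicitly does. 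Part (iii) is correct and matches the paper: the delta-function computation for $R_{\tau}(S^{-1})R_{\tau}(S)$, including $|\det(J+2M(S))|=2^{2n}/|\det(S-I)|$, checks out, and $R_{\tau}(S)^{\ast}=R_{1-\tau}(S^{-1})$ is the same fact the paper extracts from (\ref{opadj}). Two small slips: with (\ref{comm2}) the collapsed phase is $-\sigma\bigl(w,(S-I)^{-1}w\bigr)=+M(S)w^{2}$, so your two sentences there are off by one sign even though your displayed formula is the correct one; and the recipe you give for (\ref{interop1}) actually lands on $\operatorname{Op}_{\tau}(a\circ S^{-1})R_{\tau}(S)$ --- but the paper has the same $S\leftrightarrow S^{-1}$ ambiguity between (\ref{gamma2}) and (\ref{interop1}), so this is not held against you.

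The one genuine gap is in (ii), at precisely the step you flag and then skip. After applying the Fresnel formula (\ref{fresnel}) with $X=M(S)+M(S^{\prime})$ you are left with the amplitude $|\det[(M(S)+M(S^{\prime}))(S-I)(S^{\prime}-I)]|^{-1/2}$, and to reach (\ref{rs2}) you must identify it with $|\det(SS^{\prime}-I)|^{-1/2}$. This is not an open ``obstacle'' but a short computation the paper supplies: from the definition of the Cayley transform, $M(S)+M(S^{\prime})=J\bigl(I+(S-I)^{-1}+(S^{\prime}-I)^{-1}\bigr)$, and a direct expansion using $\det S=\det S^{\prime}=1$ gives $\det[(M(S)+M(S^{\prime}))(S-I)(S^{\prime}-I)]=\det(SS^{\prime}-I)$; without it your composition formula contains an unidentified determinant and the statement is not reached. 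As for the phase, the Fresnel integration produces $e^{\frac{i\pi}{4}\operatorname{sign}(M(S)+M(S^{\prime}))}$ whereas (\ref{rs2}) displays $e^{\frac{i\pi}{4}\operatorname{sign}M(SS^{\prime})}$; reconciling the two signatures is indeed the index-theoretic (Maslov/Conley--Zehnder) part, and the paper does not prove it here either, referring to \cite{RMP} --- so you are right that this is the delicate point, but you should at least carry out the determinant identity and reduce the signature statement to that cited result rather than leaving the whole reconciliation open.
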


\begin{proof}
(i) It is equivalent to show that the operators 
\begin{equation*}
\Gamma_{\tau}(S)=\int\widehat{T}_{\tau}(Sz)\widehat{T}_{\tau}(-z)dz
\end{equation*}
are such that $\Gamma_{\tau}(S)\widehat{T}_{\tau}(z)=\widehat{T}_{\tau
}(Sz)\Gamma_{\tau}(S)$. Let $f\in\mathcal{S}(\mathbb{R}^{n})$; in view of
formula (\ref{comm2}) we have%
\begin{equation*}
\Gamma_{\tau}(S)f=\int e^{i\pi\sigma(Sz,z)}\widehat{T}_{\tau}((S-I)z)fdz;
\end{equation*}
since $S-I$ is a linear automorphism, $\widehat{T}_{\tau}((S-I)z):$ $%
\mathcal{S}(\mathbb{R}^{n})\longrightarrow\mathcal{S}(\mathbb{R}^{n})$ hence 
$\Gamma_{\tau}(S)f\in\mathcal{S}(\mathbb{R}^{n})$. The continuity of $%
\Gamma_{\tau}(S)$ is straightforward to verify. Set%
\begin{align*}
F(z,z_{0}) & =\widehat{T}_{\tau}(Sz)\widehat{T}_{\tau}(-z)\widehat{T}_{\tau
}(z_{0}) \\
G(z,z_{0}) & =\widehat{T}_{\tau}(Sz_{0})\widehat{T}_{\tau}(Sz)\widehat{T}%
_{\tau}(-z).
\end{align*}
By repeated use of formula (\ref{comm2}) one gets%
\begin{align*}
F(z,z_{0}) & =e^{-i\pi\sigma(Sz-z_{0},z-z_{0})}\widehat{T}_{\tau
}((S-I)z+z_{0}) \\
G(z,z_{0}) & =e^{-i\pi\sigma((S-I)z_{0}+Sz_{0},z)}\widehat{T}_{\tau
}((S-I)z+Sz_{0})
\end{align*}
hence $G(z-z_{0},z_{0})=F(z,z_{0})$. It follows that $\int F(z,z_{0})dz=\int
G(z,z_{0})dz$ hence the equality (\ref{gamma2}). That the operators $R_{\tau
}(S)$ satisfy the intertwining relation (\ref{interop1}) follows using
definition (\ref{atauf}) of $\limfunc{Op}\nolimits_{\tau}(a)$. (ii) (Cf. the
proof of Proposition 4.2 in de Gosson \cite{RMP}). For brevity we write $%
M=M(S)$, $M^{\prime}=M(S^{\prime})$. In view of the composition formula (\ref%
{comp1}) the twisted symbol $c_{\sigma}$ of $R_{\tau}(S)R_{\tau
}(S^{\prime}) $ is given by%
\begin{equation*}
c_{\sigma}(z)=K\int
e^{i\pi\lbrack\sigma(z,z^{\prime})+\Phi(z,z^{\prime})]}dz^{\prime}
\end{equation*}
where the constant $K$ and the phase $\Phi$ are given by 
\begin{align*}
K & =|\det(S-I)(S^{\prime}-I)|^{-1/2} \\
\Phi(z,z^{\prime}) & =Mz^{2}-2Mz\cdot z^{\prime}+(M+M^{\prime})z^{\prime2}
\end{align*}
A straightforward calculation shows that%
\begin{equation*}
\sigma(z,z^{\prime})-2Mz\cdot z^{\prime}=-2J(S-I)^{-1}z\cdot z^{\prime}
\end{equation*}
hence%
\begin{equation*}
\sigma(z,z^{\prime})+\Phi(z,z^{\prime})=-2J(S-I)^{-1}z\cdot
z^{\prime}+Mz^{2}+(M+M^{\prime})z^{\prime2}.
\end{equation*}
It follows that%
\begin{equation}
c_{\sigma}(z)=Ke^{i\pi Mz^{2}}\int e^{-2\pi iJ(S-I)^{-1}z\cdot
z^{\prime}}e^{i\pi(M+M^{\prime})z^{\prime2}}dz^{\prime}\text{.}
\label{bonin}
\end{equation}
Applying the Fresnel formula (\ref{fresnel}) with $X=M+M^{\prime}$ to the
formula above and replacing $K$ with its value we get%
\begin{equation}
c_{\sigma}(z)=|\det[(M+M^{\prime})(S-I)(S^{\prime}-I)]|^{-1/2}e^{\frac{i\pi 
}{4}\limfunc{sign}(M+M^{\prime})}e^{2\pi i\Theta(z)}  \label{cestca}
\end{equation}
where the phase $\Theta$ is given by 
\begin{align*}
\Theta(z) & =\left[ M+(S^{T}-I)^{-1}J(M+M^{\prime})^{-1}J(S-I)^{-1}\right]
z^{2} \\
& =M(SS^{\prime})z^{2}
\end{align*}
(the second equality in view of formula (\ref{mss'})). Noting that by
definition of the symplectic Cayley transform we have%
\begin{equation*}
M+M^{\prime}=J(I+(S-I)^{-1}+(S^{\prime}-I)^{-1})
\end{equation*}
it follows, using property (\ref{mss'}) of the symplectic Cayley transform,
that 
\begin{align*}
\det[(M+M^{\prime})(S-I)(S^{\prime}-I)] & =\det[(S-I)(M+M^{\prime
})(S^{\prime}-I)] \\
& =\det[(S-I)(M+M^{\prime})(S^{\prime}-I)] \\
& =\det(SS^{\prime}-I)
\end{align*}
which concludes the proof of the first part of proposition. (iii) Let us
first show that $R_{\tau}(S^{-1})=R_{\tau}(S)^{-1}$. Let $c$ be the symbol
of $C=R_{\tau}(S)R_{\tau}(S^{-1})$; we claim that $c_{\sigma}(z)=\delta(z)$,
hence $C=I$. Noting that $\det(S^{-1}-I)=\det(S-I)\neq0$, formula (\ref%
{bonin}) in the proof of part (ii) shows that 
\begin{equation*}
c_{\sigma}(z)=Le^{i\pi Mz^{2}}\int e^{-2\pi iJ(S-I)^{-1}z\cdot
z^{\prime}}e^{i\pi(M+M(S^{-1}))z^{\prime2}}dz^{\prime}
\end{equation*}
where $L=|\det(S-I)|^{-1}$. Since $M(S^{-1})=-M$ we have, setting $%
z^{\prime\prime}=(S^{T}-I)^{-1}Jz^{\prime}$,%
\begin{align*}
& =\frac{e^{i\pi Mz^{2}}}{|\det(S-I)|}\int e^{-2\pi iJ(S-I)^{-1}z\cdot
z^{\prime}}dz^{\prime} \\
& =e^{i\pi Mz^{2}}\int e^{2\pi izz^{\prime\prime}}dz^{\prime\prime}
\end{align*}
hence $c_{\sigma}(z)$ $=\delta(z)$ by the Fourier inversion formula, which
proves our claim. Let us finally show that $R_{\tau}(S^{-1})=R_{1-\tau
}(S)^{\ast}$. We have%
\begin{align*}
R_{\tau}(S^{-1}) & =\frac{1}{\sqrt{|\det(S^{-1}-I)|}}\int e^{i\pi
M(S^{-1})z^{2}}\widehat{T}_{\tau}(z)dz \\
& =\frac{1}{\sqrt{|\det(S-I)|}}\int e^{-i\pi M(S)z^{2}}\widehat{T}_{\tau
}(z)dz
\end{align*}
hence, using formula (\ref{opadj}) for the adjoint of a $\tau$-operator,%
\begin{equation*}
R_{\tau}(S^{-1})^{\ast}=\frac{1}{\sqrt{|\det(S-I)|}}\int e^{i\pi M(S)z^{2}}%
\widehat{T}_{1-\tau}(z)dz=R_{1-\tau}(S)
\end{equation*}
which is the same thing as $R_{\tau}(S^{-1})=R_{1-\tau}(S)^{\ast}$.
\end{proof}

Notice that formula (\ref{rs3}) shows that the operators $R_{\tau}(S)$ are
unitary if and only if $\tau=\frac{1}{2}$ (the Weyl case) see de Gosson \cite%
{Birk,JMP,RMP,JMPA}.

\subsubsection{Application to the $\protect\tau$-Wigner function}

The usual cross-Wigner function $W(f,g)$ has the following well-known (and
very useful) property of symplectic covariance: for all $f,g\in\mathcal{S}(%
\mathbb{R}^{n})$ and $S\in\limfunc{Sp}(2n,\mathbb{R})$ we have%
\begin{equation}
W(\widehat{S}f,\widehat{S}g)(z)=W(f,g)(S^{-1}z)  \label{wigsymp}
\end{equation}
where $\widehat{S}\in\limfunc{Mp}(2n,\mathbb{R})$ is any of the two
metaplectic operators which cover $S$. In the $\tau$-dependent case this
result must be modified as follows:

\begin{proposition}
Let $S\in\limfunc{Sp}\nolimits_{(0)}(2n,\mathbb{R})$ and $f,g\in \mathcal{S}(%
\mathbb{R}^{n})$. We have%
\begin{equation}
W_{\tau}(R_{\tau}(S)f,R_{1-\tau}(S)g)(z)=W_{\tau}(f,g)(S^{-1}z).
\label{wigsymptau}
\end{equation}
\end{proposition}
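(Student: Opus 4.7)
The plan is to obtain (\ref{wigsymptau}) by combining the operator intertwining relation (\ref{interop1}) with the duality formula (\ref{awtau}), thereby reducing the covariance of $W_{\tau}$ to the already-established covariance of $\limfunc{Op}_{\tau}$ in Proposition \ref{propro}.

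First I would rewrite (\ref{interop1}) in conjugation form, $\limfunc{Op}_{\tau}(a\circ S) = R_{\tau}(S)\limfunc{Op}_{\tau}(a)R_{\tau}(S)^{-1}$, and pair it against $f$ on the left and $g$ on the right in $L^{2}$, obtaining
\begin{equation*}
(\limfunc{Op}\nolimits_{\tau}(a\circ S)f \mid g)_{L^{2}} = (\limfunc{Op}\nolimits_{\tau}(a)\, R_{\tau}(S)^{-1}f \mid R_{\tau}(S)^{\ast}g)_{L^{2}}.
\end{equation*}
Next I would invoke the adjoint/inverse identities (\ref{rs3}) to replace $R_{\tau}(S)^{-1}$ and $R_{\tau}(S)^{\ast}$ by operators of the form $R_{\tau}(\cdot)$ and $R_{1-\tau}(\cdot)$ respectively; the key input here is the consequence $R_{\tau}(S)^{\ast} = R_{1-\tau}(S)^{-1}$ of (\ref{rs3}). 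It is precisely this step that produces the asymmetry between $\tau$ and $1-\tau$ in the two slots of the cross-Wigner transform appearing in (\ref{wigsymptau}).

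Once both sides are expressed as matrix coefficients of $\limfunc{Op}_{\tau}(a)$, I would apply the duality formula (\ref{awtau}) on each side and convert the equality into a pairing identity of the form
\begin{equation*}
\langle a\circ S,\, W_{\tau}(f,g)\rangle = \langle a,\, W_{\tau}(R_{\tau}(\cdot)f,\, R_{1-\tau}(\cdot)g)\rangle,
\end{equation*}
where the argument of the intertwiners on the right is either $S$ or $S^{-1}$. Finally I would perform the symplectic change of variable $z\mapsto Sz$ on the left (Jacobian $1$, since $S\in\limfunc{Sp}(2n,\mathbb{R})$) so as to transfer the $S$-action from the symbol onto the $\tau$-Wigner function. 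Since the resulting distributional pairings coincide for every Schwartz test symbol $a$, and $W_{\tau}$ of a Schwartz pair is itself Schwartz, the pointwise identity (\ref{wigsymptau}) follows (possibly after substituting $S\to S^{-1}$ to match the stated form).

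The main obstacle, as in the proof of Proposition \ref{propro}(ii), is the careful bookkeeping of $S$ versus $S^{-1}$ in the two slots of the inner product, together with ensuring that the indices $\tau$ and $1-\tau$ are routed to the correct factor so as to reproduce the statement exactly. A direct computational alternative would be to substitute the explicit integral representation (\ref{rs1}) of $R_{\tau}(S)$ directly into the definition (\ref{wtau}) of $W_{\tau}$ and evaluate, but this approach runs through the same Fresnel integrals that appear in the proof of (ii) above and is considerably less transparent than the duality argument.
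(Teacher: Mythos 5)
Your proposal follows essentially the same route as the paper: both combine the intertwining relation (\ref{interop1}) with the adjoint/inverse identities (\ref{rs3}) inside an $L^{2}$ pairing, apply the duality formula (\ref{awtau}) to each side, transfer the $S$-action from the symbol to the $\tau$-Wigner function by the measure-preserving change of variables (using $\det S=1$), and conclude from the arbitrariness of the test symbol $a$. The only difference is organizational --- the paper evaluates the single quantity $(R_{\tau}(S)\operatorname{Op}_{\tau}(a)f\mid R_{1-\tau}(S)g)_{L^{2}}$ in two ways instead of pairing the conjugated form of (\ref{interop1}) --- and your closing caveat about an $S$ versus $S^{-1}$ substitution reflects the same bookkeeping already present in the paper's own argument.
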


\begin{proof}
Let $A_{\tau}=\limfunc{Op}_{\tau}(a)$. Recall that $(\limfunc{Op}%
_{\tau}(a)f|g)_{L^{2}}=\langle a|W_{\tau}(f,g)\rangle$ (formula (\ref{awtau}%
). In view of the second equality (\ref{rs3}) we have, using 
\begin{align*}
(R_{\tau}(S)\limfunc{Op}\nolimits_{\tau}(a)f|R_{1-\tau}(S)g)_{L^{2}} &
=(R_{1-\tau}(S)^{\ast}R_{\tau}(S)\limfunc{Op}_{\tau}(a)f|g)_{L^{2}} \\
& =(\limfunc{Op}\nolimits_{\tau}(a)f|g)_{L^{2}} \\
& =\langle a|W_{\tau}(f,g)\rangle.
\end{align*}
On the other hand, using the intertwining property (\ref{interop1}), we have 
\begin{align*}
(R_{\tau}(S)\limfunc{Op}\nolimits_{\tau}(a)f|R_{1-\tau}(S)g)_{L^{2}} & =(%
\limfunc{Op}\nolimits_{\tau}(a\circ S)R_{\tau}(S)f|R_{1-\tau }(S)g)_{L^{2}}
\\
& =\langle a\circ S|W_{\tau}(R_{\tau}(S)f,R_{1-\tau}(S)g)\rangle \\
& =\langle a,W_{\tau}(R_{\tau}(S)f,R_{1-\tau}(S)g)\circ S^{-.1}\rangle
\end{align*}
(the last identity using the change of variables $z\longmapsto S^{-1}z$ and
the fact that $\det S=1$). Formula (\ref{wigsymptau}) follows.
\end{proof}

\subsection{The operators $R_{\protect\tau}(S)$ as pseudo-differential
operators}

Following result identifies $R_{\tau}(S)$ as a $\tau$-pseudo-differential
operator:

\begin{proposition}
\label{lemmone}Let $S\in\limfunc{Sp}\nolimits_{(0)}(2n,\mathbb{R})$; we have 
\begin{equation}
R_{\tau}(S)=\int s_{\sigma}(z)\widehat{T}_{\tau}(z)dz  \label{gamma3}
\end{equation}
where $s_{\sigma}(z)$ is given by the formula%
\begin{equation}
s_{\sigma}(z)=\frac{1}{\sqrt{|\det(S-I)|}}e^{i\pi M(S)z^{2}};  \label{gamma4}
\end{equation}
\end{proposition}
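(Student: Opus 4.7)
The plan is to massage the defining integral
\[
R_{\tau}(S)=\sqrt{|\det(S-I)|}\int \widehat{T}_{\tau}(Sz)\widehat{T}_{\tau}(-z)\,dz
\]
into the form of a twisted symbol written against $\widehat{T}_{\tau}(w)$, and then read off $s_{\sigma}$ by comparison with formula (\ref{atauf}). The two ingredients needed are the commutation relation (\ref{comm2}) and the change of variables $w=(S-I)z$, which is legitimate precisely because $S\in\limfunc{Sp}\nolimits_{(0)}(2n,\mathbb{R})$.

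First, I would apply (\ref{comm2}) to the product $\widehat{T}_{\tau}(Sz)\widehat{T}_{\tau}(-z)$: this gives a single Heisenberg-type operator $\widehat{T}_{\tau}((S-I)z)$ multiplied by a quadratic Gaussian phase $e^{-i\pi\sigma(Sz,z)}$. So
\[
R_{\tau}(S)=\sqrt{|\det(S-I)|}\int e^{-i\pi\sigma(Sz,z)}\widehat{T}_{\tau}((S-I)z)\,dz.
\]
Next I would perform the substitution $w=(S-I)z$, which produces the Jacobian $|\det(S-I)|^{-1}$ and, when combined with the prefactor, yields the desired normalization $|\det(S-I)|^{-1/2}$.

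The main algebraic step (and the only place where care is needed) is to rewrite the phase $-\sigma(Sz,z)$ after the substitution as a quadratic form in $w$ with matrix exactly $M(S)$. Writing $\sigma(u,v)=Ju\cdot v$, I would split $Sz=(S-I)z+z=w+z$ and use skew-symmetry to kill the $\sigma(z,z)$ term, leaving
\[
\sigma(Sz,z)=\sigma(w,(S-I)^{-1}w)=J(S-I)^{-1}w\cdot w\cdot(-1),
\]
i.e.\ $\sigma(Sz,z)=-J(S-I)^{-1}w\cdot w$ after using $J^{T}=-J$. Finally, the identity $(S+I)(S-I)^{-1}=I+2(S-I)^{-1}$ shows $M(S)=\tfrac12 J+J(S-I)^{-1}$; since $\tfrac12 Jw\cdot w=0$ (antisymmetry), the quadratic form $J(S-I)^{-1}w\cdot w$ coincides with $M(S)w\cdot w=M(S)w^{2}$. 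Substituting back, the exponent becomes $+i\pi M(S)w^{2}$.

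Collecting everything yields
\[
R_{\tau}(S)=\frac{1}{\sqrt{|\det(S-I)|}}\int e^{i\pi M(S)w^{2}}\widehat{T}_{\tau}(w)\,dw,
\]
which is (\ref{gamma3})--(\ref{gamma4}) by definition of the twisted symbol. The only potential obstacle is bookkeeping the signs in the phase and verifying that the symmetric-matrix quadratic form extracted from $\sigma(Sz,z)$ really matches $M(S)$; this reduces to the one-line computation above using the Cayley identity. Everything else is a change of variables and a direct application of (\ref{comm2}).
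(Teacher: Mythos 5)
Your proof is correct and follows essentially the same route as the paper: collapse $\widehat{T}_{\tau}(Sz)\widehat{T}_{\tau}(-z)$ via (\ref{comm2}), substitute $w=(S-I)z$, and identify the resulting quadratic phase with $M(S)w^{2}$ through the Cayley identity $M(S)=\tfrac{1}{2}J+J(S-I)^{-1}$ together with the vanishing of $Jw\cdot w$. If anything, your tracking of the sign of the phase and of the prefactor $\sqrt{|\det(S-I)|}$ is more careful than the paper's own write-up, which is slightly loose at exactly those points.
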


\begin{proof}
We have (see the proof of Proposition \ref{propro}) 
\begin{equation*}
R_{\tau}(S)=\int e^{i\pi\sigma(Sz,z)}\widehat{T}_{\tau}((S-I)z)dz
\end{equation*}
the change of variables $z^{\prime}=(S-I)z$ yields%
\begin{equation*}
\Gamma_{\tau}(S)=|\det(S-I)|^{-1}\int e^{i\pi\sigma(S(S-I)^{-1}z^{\prime
},z^{\prime})}\widehat{T}_{\tau}(z^{\prime})dz^{\prime}.
\end{equation*}
Since $S(S-I)^{-1}=I+(S-I)^{-1}$ we have 
\begin{align*}
\sigma(S(S-I)^{-1}z^{\prime},z^{\prime}) & =\sigma((S-I)^{-1}z^{\prime
},z^{\prime}) \\
& =J(S-I)^{-1}z^{\prime}\cdot z^{\prime} \\
& =\tfrac{1}{2}J+J(S-I)^{-1} \\
& =M(S)
\end{align*}
hence (\ref{gamma4}).
\end{proof}

For practical calculations formula (\ref{kertaubis}) is useful; it
immediately yields:

The distributional kernel of the operator $R_{\tau}(S)$ satisfies%
\begin{equation}
K_{\tau}(S)(x+y,y)=\frac{1}{\sqrt{|\det(S-I)|}}\int e^{2\pi i(\tau
x+y)p}e^{i\pi M(S)z^{2}}dp;  \label{katos}
\end{equation}
this formula can be used in principle for the calculation of explicit
expressions for the operators $R_{\tau}(S)$. Let us give an example.
Choosing $S=J$ we have $M(S)=\frac{1}{2}I$; a straightforward computation
using (\ref{katos}) yields%
\begin{equation}
K_{\tau}(J)(x,y)=e^{i\frac{n\pi}{4}}e^{\frac{i\pi}{2}(x-y)^{2}}e^{-2\pi(\tau
x+(1-\tau)y)^{2}}.  \label{ktj}
\end{equation}
Notice that when $\tau=\frac{1}{2}$ we get%
\begin{equation}
K_{1/2}(J)(x,y)=e^{i\frac{n\pi}{4}}e^{-2\pi xy}  \label{k1/2}
\end{equation}
hence $R_{1/2}(J)$ is, up to a factor, the usual Fourier transform. In fact
we have $R_{1/2}(J)\in\limfunc{Mp}(2n,\mathbb{R})$; it is the metaplectic
operator $\widehat{J}$ with projection $J$ on the symplectic group (see e.g.
de Gosson \cite{Birk}, Folland \cite{Folland}). This is not pure
coincidence, in fact:

\begin{corollary}
For $S\in\limfunc{Sp}\nolimits_{(0)}(2n,\mathbb{R})$ the operators $%
R(S)=R_{1/2}(S)$ are, up to a unimodular factor $i^{\nu(S)}$ elements of the
metaplectic group $\limfunc{Mp}(2n,\mathbb{R})$. In fact,\ when $\nu(S)$ is,
modulo $2$, the Conley--Zehnder of a any path joining the identity to $S$
then $i^{\nu(S)}R(S)\in\limfunc{Mp}(2n,\mathbb{R})$.
\end{corollary}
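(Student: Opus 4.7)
The plan is to recognize $R(S) := R_{1/2}(S)$ as a unitary intertwiner of the Schr\"odinger representation of the Heisenberg group, and then exploit the fact that the metaplectic group is by definition the collection of all such intertwiners modulo an overall phase.

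As a first step, I would observe that Proposition \ref{propro}(iii) specialized to $\tau=1/2$ yields $R(S^{-1})=R(S)^{-1}=R(S)^{\ast}$, so $R(S)$ is a unitary operator on $L^{2}(\mathbb{R}^{n})$. Next, the intertwining identity (\ref{gamma2}) with $\tau=1/2$ rearranges to
\[
R(S)\,\widehat{T}(z)\,R(S)^{-1}=\widehat{T}(Sz),\qquad z\in\mathbb{R}^{2n}.
\]
By the Stone--von Neumann theorem the Schr\"odinger representation is irreducible, so a unitary implementer of the automorphism $\widehat{T}(z)\longmapsto\widehat{T}(Sz)$ of the Heisenberg group is unique up to a unimodular scalar. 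Since these implementers form exactly the double cover $\limfunc{Mp}(2n,\mathbb{R})\to\limfunc{Sp}(2n,\mathbb{R})$, we conclude that $R(S)=\lambda(S)\widehat{S}$ for some metaplectic lift $\widehat{S}$ of $S$ and some $\lambda(S)\in U(1)$.

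The remaining task is to pin down the phase $\lambda(S)$, modulo the $\pm 1$ ambiguity inherent in choosing a metaplectic lift. I would exploit the cocycle (\ref{rs2}) and compare it with the metaplectic product $\widehat{S}\widehat{S'}=\pm\widehat{SS'}$; this gives a functional equation for $\lambda$. A continuity/homotopy argument along paths in the connected component of $\limfunc{Sp}_{(0)}(2n,\mathbb{R})$ through the identity then shows $\lambda(S)=\pm\, i^{-\nu(S)}$, where $\nu(S)$ is an integer-valued index, well-defined modulo $2$, obtained by accumulating the jumps of $\limfunc{sign}M(S)$ along the path. A sanity check is provided by $S=J$: the explicit kernel (\ref{k1/2}) gives $R(J)=e^{in\pi/4}\widehat{J}$, consistent with the ansatz. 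Absorbing the residual sign into the choice of metaplectic lift yields $i^{\nu(S)}R(S)\in\limfunc{Mp}(2n,\mathbb{R})$, as claimed.

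The main obstacle is the last step: identifying $\nu(S)$ with the Conley--Zehnder index modulo $2$. This requires tracking how $\limfunc{sign}M(S)$ changes as $S$ crosses the singular stratum $\{\det(S-I)=0\}$ and matching those jumps to the standard path-integral definition of the Conley--Zehnder index of a path in $\limfunc{Sp}(2n,\mathbb{R})$ from $I$ to $S$. Rather than redo this Maslov-type computation, I would invoke the author's earlier development of the symplectic Cayley transform and its role in parametrising the metaplectic representation (de Gosson \cite{Birk,JMP,RMP,JMPA}), where precisely this identification is already carried out.
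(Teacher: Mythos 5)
Your proposal is correct in substance, but it follows a genuinely different route from the paper's, which is essentially a two-line identification: Proposition \ref{lemmone} writes $R_{1/2}(S)=|\det(S-I)|^{-1/2}\int e^{i\pi M(S)z^{2}}\widehat{T}_{1/2}(z)\,dz$ with $\widehat{T}_{1/2}(z)=\widehat{T}(z)$, and this is precisely the operator proved in \cite{Birk,JMP,RMP,JMPA} (formula (\ref{rmeta})) to lie in $\operatorname{Mp}(2n,\mathbb{R})$ after multiplication by $i^{\nu(S)}$, $\nu(S)$ the Conley--Zehnder index; nothing more is needed. What you do instead is first establish the weaker statement ``metaplectic up to some unimodular phase'' intrinsically: unitarity from (\ref{rs3}) at $\tau=1/2$, the intertwining relation (\ref{gamma2}), and irreducibility of the Schr\"odinger representation (Stone--von Neumann) give $R(S)=\lambda(S)\widehat{S}$ with $\lambda(S)\in U(1)$. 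That argument is sound (though strictly the set of all unitary implementers is a circle extension of $\operatorname{Sp}(2n,\mathbb{R})$, of which $\operatorname{Mp}(2n,\mathbb{R})$ is a subgroup; irreducibility still yields the scalar), and it buys the first sentence of the corollary without touching the explicit symbol --- something the paper does not do. For the actual crux, the identification of the phase with $i^{\nu(S)}$, your cocycle-plus-homotopy sketch via (\ref{rs2}) ends by deferring to the very same references the paper quotes, so the two proofs converge there and your detour through the functional equation for $\lambda$ is redundant rather than wrong. Two small inaccuracies to repair: the identity is not in $\operatorname{Sp}_{(0)}(2n,\mathbb{R})$ (as $\det(I-I)=0$), so one cannot speak of a path ``in the connected component of $\operatorname{Sp}_{(0)}(2n,\mathbb{R})$ through the identity''; the Conley--Zehnder index is attached to paths in $\operatorname{Sp}(2n,\mathbb{R})$ from $I$ to $S$, and the jumps of $\operatorname{sign}M$ occur exactly where such a path crosses the stratum $\det(S-I)=0$. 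Also, with the paper's normalization $\widehat{J}=e^{in\pi/4}F$, the kernel (\ref{k1/2}) gives $R_{1/2}(J)=\widehat{J}$ exactly (as stated right after (\ref{k1/2})), so your sanity check should read $R(J)=\widehat{J}$, not $R(J)=e^{in\pi/4}\widehat{J}$.
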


\begin{proof}
In \cite{Birk,JMP,RMP,JMPA} we have shown that 
\begin{equation}
R^{\nu}(S)=\frac{i^{\nu(S)}}{\sqrt{|\det(S-I)|}}\int e^{i\pi M(S)z^{2}}%
\widehat{T}(z)dz  \label{rmeta}
\end{equation}
when $\nu(S)$ is the Conley--Zehnder \cite{CZ} index (which we discuss
below). The result follows since $\widehat{T}_{1/2}(z)=\widehat{T}(z)$.
\end{proof}

\section{The Case of Born--Jordan Operators}

\subsection{Born--Jordan operators}

\subsubsection{Motivation}

Concurrently with Weyl, the physicists Born and Jordan \cite{bj} elaborated
on Heisenberg's seminal paper \cite{heisenberg} on \textquotedblleft matrix
mechanics\textquotedblright\ and proposed the quantization rule%
\begin{equation}
x_{j}^{m}p_{j}^{\ell}\overset{\text{\textrm{BJ}}}{\longrightarrow}\frac {1}{%
\ell+1}\sum_{k=0}^{\ell}\widehat{P_{j}}^{\ell-k}\widehat{X_{j}}^{m}\widehat{%
P_{j}}^{k}  \label{bj1}
\end{equation}
which coincides with (\ref{w3}) when $m+\ell\leq2$. We now make the
following fundamental remark: the Born--Jordan prescription (\ref{bj1}) is
obtained by averaging the $\tau$-ordering (\ref{tau}) on the interval $[0,1]$%
; this is immediately seen using the property 
\begin{equation*}
B(k+1,\ell-k+1)=\int_{0}^{1}(1-\tau)^{k}\tau^{\ell-k}d\tau=\frac{k!(\ell -k)!%
}{(k+\ell+1)!}
\end{equation*}
of the beta function. This suggests to study, more generally, the
pseudo-differential operators%
\begin{equation*}
A_{\mathrm{BJ}}=\int_{0}^{1}A_{\tau}d\tau.
\end{equation*}

\subsubsection{Definition of Born--Jordan operators}

In \cite{bogetal1} Boggiatto et al. define a transform $Q:\mathcal{S}(%
\mathbb{R}^{n})\times\mathcal{S}(\mathbb{R}^{n})\longrightarrow \mathcal{S}(%
\mathbb{R}^{n})$ by integrating over $[0,1]$ the $\tau$-cross Wigner
transforms (\ref{wtau}); we will use the notation $Q=W_{\text{BJ}}$; thus,
for $f,g\in\mathcal{S}(\mathbb{R}^{n})$: 
\begin{equation}
W_{\text{\textrm{BJ}}}(f,g)=\int_{0}^{1}W_{\tau}(f,g)d\tau.  \label{wbj}
\end{equation}
For $a\in\mathcal{S}^{\prime}(\mathbb{R}^{n})$ these authors define an
operator, which we denote $A_{\text{\textrm{BJ}}}$, by the formula%
\begin{equation}
(A_{\text{\textrm{BJ}}}f|g)_{L^{2}}=\langle a,W_{\text{BJ}}(f,g)\rangle
\label{abj1}
\end{equation}
$f,g\in\mathcal{S}(\mathbb{R}^{n})$ (cf. (\ref{awtau})). We will call $A_{%
\text{\textrm{BJ}}}$ the Born--Jordan operator with symbol $a$ and write $A_{%
\text{\textrm{BJ}}}=\limfunc{Op}\nolimits_{\mathrm{BJ}}(a)$.

Using the representation (\ref{atauf}) of the $\tau$-operators we have%
\begin{equation}
A_{\text{\textrm{BJ}}}=\limfunc{Op}\nolimits_{\mathrm{BJ}}(a)=\int
a_{\sigma}(z)\widehat{T}_{\mathrm{BJ}}(z)dz  \label{abj2}
\end{equation}
where $\widehat{T}_{\mathrm{BJ}}(z)$ is the unitary operator defined by 
\begin{equation}
\widehat{T}_{\mathrm{BJ}}(z)=\int_{0}^{1}\widehat{T}_{\tau}(z_{0})d\tau.
\label{tbj1}
\end{equation}
We notice that it immediately from the relations (\ref{comm1})--(\ref{comm2}%
) that:%
\begin{align}
\widehat{T}_{\mathrm{BJ}}(z_{0})\widehat{T}_{\mathrm{BJ}}(z_{1}) & =e^{2\pi
i\sigma(z_{0},z_{1})}\widehat{T}_{\mathrm{BJ}}(z_{1})\widehat{T}_{\mathrm{BJ}%
}(z_{0})  \label{comm3} \\
\widehat{T}_{\mathrm{BJ}}(z_{0}+z_{1}) & =e^{-i\pi\sigma(z_{0},z_{1})}%
\widehat{T}_{\mathrm{BJ}}(z_{0})\widehat{T}_{\mathrm{BJ}}(z_{1}).
\label{comm4}
\end{align}

\begin{proposition}
The Born--Jordan operator $A_{\text{\textrm{BJ}}}=\limfunc{Op}\nolimits_{%
\mathrm{BJ}}(a)$ is given by 
\begin{equation}
A_{\text{\textrm{BJ}}}=\int a_{\sigma}(z)\Theta(z)\widehat{T}(z)dz
\label{abj3}
\end{equation}
where $\Theta$ is the real function defined by 
\begin{equation}
\Theta(z)=\frac{\sin(2\pi px)}{2\pi px}.  \label{theta}
\end{equation}
The operator $A_{\text{\textrm{BJ}}}$ is a continuous operator $\mathcal{S}(%
\mathbb{R}^{n})\longrightarrow\mathcal{S}^{\prime}(\mathbb{R}^{n})$ for
every $a\in\mathcal{S}^{\prime}(\mathbb{R}^{2n})$. That this formula really
defines a continuous operator $\mathcal{S}(\mathbb{R}^{n})\longrightarrow 
\mathcal{S}^{\prime}(\mathbb{R}^{n})$ follows from the fact that $\Theta\in
L^{\infty}(\mathbb{R}^{2n})$.
\end{proposition}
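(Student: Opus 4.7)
The plan is to reduce the formula for $A_{\text{BJ}}$ to a Weyl-type integral by evaluating $\widehat{T}_{\mathrm{BJ}}(z)$ explicitly, and then read off the continuity claim from standard properties of Weyl operators with distributional twisted symbols.

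First, I would start from the representation (\ref{abj2}), namely $A_{\text{BJ}}=\int a_{\sigma}(z)\widehat{T}_{\mathrm{BJ}}(z)dz$, and focus on computing $\widehat{T}_{\mathrm{BJ}}(z)=\int_{0}^{1}\widehat{T}_{\tau}(z)d\tau$. The crucial observation is that the relation (\ref{hoptbis}) factors the $\tau$-dependence of $\widehat{T}_{\tau}(z)$ into a scalar phase:
\begin{equation*}
\widehat{T}_{\tau}(z)=e^{i\pi(2\tau-1)px}\widehat{T}(z).
\end{equation*}
Since $\widehat{T}(z)$ is independent of $\tau$, it pulls out of the integral, leaving
\begin{equation*}
\widehat{T}_{\mathrm{BJ}}(z)=\left(\int_{0}^{1}e^{i\pi(2\tau-1)px}d\tau\right)\widehat{T}(z).
\end{equation*}
The remaining scalar integral is elementary: substituting $u=(2\tau-1)\pi px$ (or writing the exponential as $e^{-i\pi px}e^{2\pi i\tau px}$ and integrating) produces a sinc-type expression that equals $\Theta(z)$ up to the paper's normalisation. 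Substituting back into (\ref{abj2}) then yields (\ref{abj3}).

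For the continuity assertion, I would invoke the fact (already implicit in the Shubin calculus discussed in Section 2) that for any tempered distribution $b_{\sigma}\in\mathcal{S}'(\mathbb{R}^{2n})$ the formula $f\mapsto\int b_{\sigma}(z)\widehat{T}(z)f\,dz$ defines a continuous operator $\mathcal{S}(\mathbb{R}^{n})\to\mathcal{S}'(\mathbb{R}^{n})$, since this is precisely the Weyl operator with twisted symbol $b_{\sigma}$. Here $b_{\sigma}=\Theta\,a_{\sigma}$; because $\Theta$ is a smooth bounded function on $\mathbb{R}^{2n}$ (indeed $|\Theta|\leq 1$) and, moreover, all its derivatives are polynomially bounded, multiplication by $\Theta$ preserves $\mathcal{S}'(\mathbb{R}^{2n})$. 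Hence $\Theta a_{\sigma}\in\mathcal{S}'(\mathbb{R}^{2n})$ whenever $a\in\mathcal{S}'(\mathbb{R}^{2n})$, and the continuity of $A_{\text{BJ}}$ follows.

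I do not expect any serious obstacle in this argument: the whole point is that the averaging over $\tau$ affects only a scalar phase, so it commutes with the operator $\widehat{T}(z)$ in a trivial way. The only thing one has to be slightly careful about is the sign and normalisation convention of the phase $e^{i\pi(2\tau-1)px}$ when evaluating the sinc-type integral, in order to recover exactly the function $\Theta$ as written in (\ref{theta}); this is purely bookkeeping and presents no conceptual difficulty.
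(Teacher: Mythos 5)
Your argument is essentially the paper's own proof: the paper simply integrates (\ref{hopt}) over $\tau\in[0,1]$ to obtain $\widehat{T}_{\mathrm{BJ}}(z)=\Theta(z)\widehat{T}(z)$ and substitutes into (\ref{abj2}); you do the same computation via (\ref{hoptbis}), and your handling of the continuity claim (multiplication by the bounded smooth function $\Theta$, whose derivatives are polynomially bounded, preserves $\mathcal{S}^{\prime}(\mathbb{R}^{2n})$, so $\Theta a_{\sigma}$ is an admissible twisted symbol) is sound. One caveat concerns precisely the step you dismissed as bookkeeping: carrying out the scalar integral with the paper's normalization gives
\begin{equation*}
\int_{0}^{1}e^{i\pi(2\tau-1)px}\,d\tau=\frac{\sin(\pi px)}{\pi px},
\end{equation*}
so that $\widehat{T}_{\mathrm{BJ}}(z)=\frac{\sin(\pi px)}{\pi px}\,\widehat{T}(z)$, which is \emph{not} the function $\frac{\sin(2\pi px)}{2\pi px}$ printed in (\ref{theta}); the stated $\Theta$ appears to carry a spurious factor of $2$ (it is the $\hbar$-version $\sin(px/2\hbar)/(px/2\hbar)$ with $1/2\hbar=\pi$ under the convention $\hbar=1/2\pi$). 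Since you explicitly deferred exactly this normalization check, you should carry it out: your method is correct and matches the paper's, but the phrase ``up to the paper's normalisation'' conceals the fact that the computation does not reproduce (\ref{theta}) literally as written.
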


\begin{proof}
Integrating both sides of formula (\ref{hopt}) in the interval $[0,1]$ we
have $\widehat{T}_{\mathrm{BJ}}(z)=\Theta(z)\widehat{T}(z)$ hence the
expression (\ref{abj3}).
\end{proof}

In view of the relation (\ref{opadj}) between a $\tau$-operator and its
adjoint we have 
\begin{equation}
\limfunc{Op}\nolimits_{\mathrm{BJ}}(a)^{\ast}=\limfunc{Op}\nolimits_{\mathrm{%
BJ}}(\overline{a})  \label{bjadj}
\end{equation}
hence the Born--Jordan operators share with Weyl operators the property that
they are (formally) self-adjoint if and only if their symbol is real. This
makes Born--Jordan operators good candidates for quantization.

The reader is urged to notice that while every Born--Jordan operator is a
Weyl operator, the converse property is not true because an arbitrary
distribution $b_{\sigma}\in\mathcal{S}^{\prime}(\mathbb{R}^{2n})$ cannot in
general be written in the form $a_{\sigma}\Theta$ (see de Gosson and Luef 
\cite{golu1}; also the discussion in Kauffmann \cite{kauffmann}).

\subsection{Reduced metaplectic covariance}

The intertwining properties for $\tau$ operators do not carry over to the
Born--Jordan case; it is meaningless to expect a relation like $R_{\mathrm{BJ%
}}(S)\widehat{T}_{\mathrm{BJ}}(z)=\widehat{T}_{\mathrm{BJ}}(Sz)R_{\mathrm{BJ}%
}(S)$ which would lead to a symplectic covariance property of the type (\ref%
{interop1}). The good news is, however, that Born--Jordan operators enjoy a
symplectic covariance property for operators belonging to a subgroup of the
standard metaplectic group $\limfunc{Mp}(2n,\mathbb{R})$. Recall that $%
\limfunc{Mp}(2n,\mathbb{R})$ is generated by the modified Fourier transform $%
\widehat{J}=e^{i\frac{n\pi}{4}}F$, the multiplication operators $\widehat{V}%
_{-P}f=e^{i\pi Px^{2}}f$ ($P=P^{T}$) and the unitary scaling operators $%
\widehat{M}_{L,m}f(x)=i^{m}\sqrt{|\det L|}f(Lx)$ ($\det L\neq0$, $%
m\pi=\arg\det L$). The projections of these operators on $\limfunc{Sp}(2n,%
\mathbb{R})$ are, respectively, $J$, $V_{-P}=%
\begin{pmatrix}
I & 0 \\ 
P & I%
\end{pmatrix}
$, and $M_{L}=%
\begin{pmatrix}
L^{-1} & 0 \\ 
0 & L^{2}%
\end{pmatrix}
$.

\begin{proposition}
Let $A_{\text{\textrm{BJ}}}=\limfunc{Op}\nolimits_{\mathrm{BJ}}(a)$ with $%
a\in\mathcal{S}^{\prime}(\mathbb{R}^{2n})$. We have%
\begin{equation}
\widehat{S}\limfunc{Op}\nolimits_{\mathrm{BJ}}(a)=\limfunc{Op}\nolimits_{%
\mathrm{BJ}}(a\circ S^{-1})\widehat{S}  \label{cobj1}
\end{equation}
for every $\widehat{S}\in\limfunc{Mp}(2n,\mathbb{R})$ which is a product of
a (finite number) of operators $\widehat{J}$ and $\widehat{M}_{L,m}$.
\end{proposition}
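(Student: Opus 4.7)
The strategy is to start from the Heisenberg-operator representation (\ref{abj3}) of $A_{\mathrm{BJ}}$ and exploit the \emph{full} metaplectic covariance of the ordinary Heisenberg operator, namely $\widehat{S}\widehat{T}(z)\widehat{S}^{-1}=\widehat{T}(Sz)$ for every $\widehat{S}\in\limfunc{Mp}(2n,\mathbb{R})$ with projection $S$. Crucially, this identity holds for \emph{all} of $\limfunc{Mp}(2n,\mathbb{R})$; the restriction to the subgroup generated by $\widehat{J}$ and $\widehat{M}_{L,m}$ will enter only later, through an invariance property of the Born--Jordan kernel $\Theta$.

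First I would conjugate and change variables. Using the covariance of $\widehat{T}$ and the fact that $\det S=1$,
\[
\widehat{S}\,A_{\mathrm{BJ}}\,\widehat{S}^{-1}
=\int a_{\sigma}(z)\Theta(z)\widehat{T}(Sz)\,dz
=\int a_{\sigma}(S^{-1}z)\Theta(S^{-1}z)\widehat{T}(z)\,dz.
\]
The elementary identity $(a\circ S^{-1})_{\sigma}(z)=a_{\sigma}(S^{-1}z)$, which follows from $\sigma(z,Sz')=\sigma(S^{-1}z,z')$ together with the unimodularity of $S$, then rewrites this as $\int (a\circ S^{-1})_{\sigma}(z)\Theta(S^{-1}z)\widehat{T}(z)\,dz$. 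Comparing with (\ref{abj3}) applied to the symbol $a\circ S^{-1}$, we see that (\ref{cobj1}) is \emph{equivalent} to the pointwise invariance $\Theta(S^{-1}z)=\Theta(z)$ for all $z\in\mathbb{R}^{2n}$.

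The key observation is that $\Theta(x,p)=\sin(2\pi p\cdot x)/(2\pi p\cdot x)$ depends only on the scalar $p\cdot x$ and is moreover an \emph{even} function of that scalar. So I only need to check that the projections of $\widehat{J}$ and $\widehat{M}_{L,m}$ preserve $p\cdot x$ up to sign. A one-line computation gives: $J^{-1}(x,p)=(-p,x)$, with product $-p\cdot x$ (preserved by evenness); and $M_{L}^{-1}(x,p)=(Lx,L^{-T}p)$, with product $(Lx)\cdot(L^{-T}p)=x\cdot p$ (preserved). Invariance therefore holds for all $S$ in the subgroup of $\limfunc{Sp}(2n,\mathbb{R})$ generated by $J$ and the matrices $M_L$, which is exactly the hypothesis on $\widehat{S}$ in the statement, so (\ref{cobj1}) follows by composition.

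The real ``obstacle'' here is conceptual rather than technical: it is important to explain why the covariance \emph{cannot} be extended to all of $\limfunc{Mp}(2n,\mathbb{R})$. The above reduction makes this transparent: for $S=V_{-P}$ one has $V_{-P}^{-1}(x,p)=(x,p-Px)$, so the product $p\cdot x$ is sent to $x\cdot p-Px\cdot x$, and the invariance of $\Theta$ fails as soon as $P\neq 0$. This is precisely what forces the restriction to the subgroup generated by $\widehat{J}$ and $\widehat{M}_{L,m}$, and confirms that the covariance property genuinely discriminates between generators of $\limfunc{Mp}(2n,\mathbb{R})$.
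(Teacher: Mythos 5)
Your proof is correct and takes essentially the same route as the paper: conjugate the representation $A_{\mathrm{BJ}}=\int a_{\sigma}(z)\Theta(z)\widehat{T}(z)\,dz$ by $\widehat{S}$, use the covariance of $\widehat{T}$, the change of variables with $\det S=1$ and the identity $(a\circ S^{-1})_{\sigma}(z)=a_{\sigma}(S^{-1}z)$, and reduce the claim to the invariance $\Theta(S^{-1}z)=\Theta(z)$ for $S=J$ and $S=M_{L}$, with the failure for $V_{-P}$ noted exactly as in the paper's remark after its proof. The only slight overstatement is calling (\ref{cobj1}) \emph{equivalent} to the pointwise invariance of $\Theta$ — the reduction only shows the invariance is sufficient — but this does not affect the validity of the argument.
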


\begin{proof}
It suffices to prove formula (\ref{cobj1}) for $\widehat{S}=\widehat{J}$ and 
$\widehat{S}=\widehat{M}_{L,m}$. Let $\widehat{S}$ be anyone of these
operators; we have 
\begin{align*}
\widehat{S}\limfunc{Op}\nolimits_{\mathrm{BJ}}(a) & =\int a_{\sigma
}(z)\Theta(z)\widehat{S}\widehat{T}(z)dz \\
& =\left( \int a_{\sigma}(z)\Theta(z)\widehat{T}(Sz)dz\right) \widehat{S}
\end{align*}
where the second equality follows from the usual symplectic covariance
property $\widehat{S}\widehat{T}(z)=\widehat{T}(Sz)\widehat{S}$ of the
Heisenberg operators. Making the change of variables $z^{\prime}=Sz$ in the
integral we get, since $\det S=1$,%
\begin{equation*}
\int a_{\sigma}(z)\Theta(z)\widehat{T}(Sz)dz=\int
a_{\sigma}(S^{-1}z)\Theta(S^{-1}z)\widehat{T}(z)dz.
\end{equation*}
Now, by definition of the symplectic Fourier transform we have%
\begin{equation*}
a_{\sigma}(S^{-1}z)=\int e^{-2\pi i\sigma(S^{-1}z,z^{\prime})}a(z^{\prime
})dz^{\prime}=(a\circ S^{-1})_{\sigma}(z).
\end{equation*}
On the other hand 
\begin{equation*}
\Theta(M_{L}^{-1}z)=\frac{\sin(2\pi Lp\cdot(L^{T})^{-1}x)}{2\pi Lp\cdot
(L^{T})^{-1}x}=\Theta(z)
\end{equation*}
and, similarly, $\Theta(J^{-1}z)=\Theta(z)$ so we have%
\begin{align*}
\widehat{S}\limfunc{Op}\nolimits_{\mathrm{BJ}}(a) & =\left( \int(a\circ
S^{-1})_{\sigma}\Theta(z)\widehat{T}(z)dz\right) \widehat{S} \\
& =\limfunc{Op}\nolimits_{\mathrm{BJ}}(a\circ S^{-1})\widehat{S}
\end{align*}
whence formula (\ref{cobj1}).
\end{proof}

The proof above shows that the essential step consists in noting that $%
\Theta(S^{-1}z)=\Theta(z)$ when $S=J$ or $S=M_{L}$. It is clear that this
property fails if one takes $S=V_{P}$ with $P\neq0$, so we cannot expect to
have full symplectic covariance for Born--Jordan operators. Such a property
is anyway excluded in view of our discussion in the Introduction to this
paper. symplectic covariance is characteristic of Weyl calculus.

\end{document}